% This is samplepaper.tex, a sample chapter demonstrating the
% LLNCS macro package for Springer Computer Science proceedings;
% Version 2.20 of 2017/10/04
%
\documentclass[runningheads]{llncs}
\usepackage{graphicx} % necessary for inserting .pdfs
\usepackage[firstpage]{draftwatermark} % free badge placement

\usepackage{braket}
\usepackage{float}
\usepackage{graphicx}
\usepackage{hyperref}

\usepackage{amsthm, amsmath, amssymb}
\usepackage{mathrsfs}
\usepackage{makecell}
\usepackage{tikz}
\usepackage{qcircuit}
\usepackage{subfigure}
\usepackage{listings}
\usepackage{multirow}
\usepackage{appendix}

\usepackage{algorithm}
\usepackage{algorithmic}

\lstset{
 columns=fixed,       
 numbers=left,                                        % 在左侧显示行号
 numberstyle=\tiny\color{gray},                       % 设定行号格式
 backgroundcolor=\color[RGB]{245,245,244},            % 设定背景颜色
 keywordstyle=\color[RGB]{40,40,255},                 % 设定关键字颜色
 numberstyle=\footnotesize\color{darkgray},           
 commentstyle=\it\color[RGB]{0,96,96},                % 设置代码注释的格式
 stringstyle=\rmfamily\slshape\color[RGB]{128,0,0},   % 设置字符串格式
 showstringspaces=false,                              % 不显示字符串中的空格
 language=python,                                        % 设置语言
 basicstyle=\ttfamily\small,
}

% Used for displaying a sample figure. If possible, figure files should
% be included in EPS format.
%
% If you use the hyperref package, please uncomment the following line
% to display URLs in blue roman font according to Springer's eBook style:

\begin{document}
\title{QReach: A Reachability Analysis Tool for Quantum Markov Chains}
%
%\titlerunning{Abbreviated paper title}
% If the paper title is too long for the running head, you can set
% an abbreviated paper title here
%

\author{Aochu Dai\inst{1} \and
Mingsheng Ying\inst{1,2}}
% Third Author\inst{3}\orcidID{2222--3333-4444-5555}}
% %
\authorrunning{A. Dai and M. Ying}
% % First names are abbreviated in the running head.
% % If there are more than two authors, 'et al.' is used.
% %
\institute{Department of Computer Science and Technology,
Tsinghua University, \\Beijing 100084, China\\ \email{dac22@mails.tsinghua.edu.cn} \and
Institute of Software, Chinese Academy of Sciences, Beijing 100190, China\\ \email{yingms@ios.ac.cn}}
% \email{lncs@springer.com}\\
% \url{http://www.springer.com/gp/computer-science/lncs} \and
% ABC Institute, Rupert-Karls-University Heidelberg, Heidelberg, Germany\\
% \email{\{abc,lncs\}@uni-heidelberg.de}}
%
\maketitle              % typeset the header of the contribution
%

% \SetWatermarkAngle{0}
% \SetWatermarkText{\raisebox{16.5cm}{%
% \hspace{0.1cm}%
% \href{https://doi.org/10.5281/zenodo.10939993}{\includegraphics{1-available}}%
% \hspace{9cm}%
% \includegraphics{2-functional}%
% }}
\begin{abstract} %We present the \textit{first automatic tool} for reachability analysis of quantum Markov chains. ... 

We present QReach, the first reachability analysis tool for quantum Markov chains based on decision diagrams CFLOBDD (presented at \textit{CAV} 2023). QReach provides a novel framework for finding reachable subspaces, as well as a series of model-checking subprocedures like image computation. Experiments indicate its practicality in verification of quantum circuits and algorithms. QReach is expected to play a central role in future quantum model checkers.

% The abstract should briefly summarize the contents of the paper in
% 15--250 words.

\keywords{Reachability analysis \and Quantum Markov chain \and Quantum Model Checking.}
\end{abstract}

\section{Introduction}
% \begin{itemize}
%     \item Background, motivations
%     \item Purpose of this paper
%     \item contribution 
% \end{itemize}
A rapid growth of quantum computing hardware has been witnessed in the last few years. As a recent breakthrough, IBM has introduced its new quantum processor Condor, which breaks the 1000-qubit barrier. Many researchers share the belief that quantum computation will be scalable and stable enough for some meaningful quantum algorithms in the foreseeable future. In the era of Fault-Tolerant Quantum Computing (FTQC), quantum systems may be too complicated to be designed and verified manually. On the other hand, systematic ventures occurring in a quantum circuit or a communication protocol may differ significantly from those in classical systems and may be counterintuitive. The success of model checking techniques in classical computing and communication industry motivates us to extend it for analysis and verification of various temporal properties of a quantum system. Indeed, several model checking algorithms have been proposed for quantum automata and quantum Markov chains \cite{mateus2009temporal,ying2021model,mateus2006weakly,reachrecursive}. Additionally, some basic communication protocols like BB84 have passed the verification of the proposed quantum model checkers \cite{baltazar2008quantum}. However, these quantum model checkers cannot be applied to larger quantum systems.  

As is well known, the scalability of classical model checkers heavily relies on the data structures (in particular, various DDs (Decision Diagrams), e.g. ROBDD) employed in them for representing the system under checking. 
Several quantum generalisations of DDs have been introduced for modelling, simulation, and verification of (combinational) quantum circuits, like QMDD \cite{niemann2015qmdds}, TDD \cite{hong2022tensor}, and LimDD \cite{vinkhuijzen2023limdd}, % QMDD, TDD, LimDD, CFLOBDD
providing different degrees of compression for quantum states and operators. Based on these diagram structures, some simulation or verification tools were developed for experimental tasks like equivalence checking and bug finding \cite{chen2023autoq,chen2023automata}. Decision diagrams have well-defined canonicity and regularization, which motivates us to implement quantum model-checking algorithms by means of DDs. Up to now, however, these quantum DDs have not been used in quantum model checking.

In this paper, we incorporate quantum DDs into quantum model checking for the first time. 
Quantum Markov chains (QMCs for short) have been adopted as a fundamental model of many quantum information processing systems (e.g. quantum communication protocols, semantics of quantum programs, etc). So, we choose to use QMCs as our system model. As is well-known, reachability analysis is a core task in classical model checking algorithms. In the quantum case, indeed, reachability analysis has been applied in quantum communication, quantum control and termination analysis of quantum programs among many others. Therefore, we focus on the issue of reachability analysis of quantum Markov chains. In addition, we decide to use Context-Free-Language Ordered Binary Decision Diagrams \cite{sistla2022cflobdds} (CFLOBDD for short), one of the most efficient quantum DDs as the backend of our tool to provide support for functionalities. We also refer to Quasimodo \cite{sistla2023symbolic}, a quantum circuit simulator based on CFLOBDD, for some of the code's implementation details. Supported by the efficiency of the DD representation, our tool is well scalable and has the potential to be expanded into large-scale quantum circuit model checkers in the future.

\textbf{Contributions of the paper}: This paper introduces the first reachability analysis tool for QMCs, called QReach\footnote{Available at \href{https://doi.org/10.5281/zenodo.10931240}{https://doi.org/10.5281/zenodo.10931240}}. It can efficiently compute reachable subspaces of QMCs with the following techniques: \begin{itemize}\item Subspaces of QMCs, which are usually defined by atomic propositions in Birkhoff-von Neumman quantum logic, are represented as CFLOBDDs in QReach. 
\item Partitioning and frontier set simplification are introduced in our algorithm to reduce the size of data structures, in analogy to corresponding techniques in classical symbolic model checking. \end{itemize}

\section{Quantum reachability analysis}
% \begin{itemize}
%     \item Some concepts of quantum Markov chain or concurrent quantum programs.
%     \item the significance of reachability analysis.
% \end{itemize}

%Quantum model checking has the potential to verify increasingly complex quantum systems. It can be used to verify quantum circuits, test a quantum system’s security, and develop new quantum hardware or software. Over the past decade, several algorithms and models have been proposed to solve or clarify quantum model-checking problems. Some temporal extensions of exogenous quantum propositional logic were proposed as quantum versions of computation tree logic and linear temporal logic. In \cite{}, various quantum models and corresponding model-checking algorithms are discussed, including quantum linear-time properties, reachability analysis with quantum graph theory, and techniques for model-checking a variety of quantum Markov chains.

% Based on past experiences with classical model-checking techniques, We believe that reachability analysis plays a crucial role in quantum model-checking methods.

For convenience of the reader, we briefly review the model of QMCs and their reachable subspaces. Recall that a Markov chain (MC for short) is a pair $\langle S, P\rangle$, where $S$ is a finite set of states and $P$ is a transition probability matrix $P:\ S\times S \rightarrow [0,1]$ satisfying a normalization condition  $\sum_{s^\prime\in S}P(s,s^\prime) = 1$ for any $s\in S$. Similarly, a QMC is defined as a pair $\langle\mathcal{H}, \mathcal{E}\rangle$, where $\mathcal{H}$ is the state Hilbert space of the quantum system under consideration and $\mathcal{E}$ is a quantum operation describing the evolution of the system, i.e. a mapping from a quantum state $\rho$ to another $\mathcal{E}(\rho)$ (also called a quantum channel in quantum information literature). 
Table-\ref{tab1} gives a detailed comparison between classical MCs and QMCs:\begin{table}
\centering
\caption{Classical Markov chains vs quantum Markov chains.}\label{tab1}
\begin{tabular}{|l|c|c|}
\hline
 &  (Discrete-time) Markov Chain & Quantum Markov Chain\\
\hline
State space & Finite or countable set & \thead{Finite-dimensional or\\ separable Hilbert space}\\
\hline
Initialization & \thead{Probability  distribution: \\ $\iota_{init} : S\rightarrow [0,1]$ \\ $\sum_{s\in S}\iota_{init}(s) = 1$} 
               & \thead{Density matrix: \\ $\rho = \sum_{i} p_i|\psi_i\rangle\langle\psi_i|$ \\ $\sum p_i = 1$} \\
\hline
Transition & \thead{Probability transition matrix: \\ $P : S\times S\rightarrow [0,1]$ \\ $\sum_{s^\prime\in S}P(s,s^\prime) = 1$} 
               & \thead{Quantum operation: \\ $\mathcal{E}(\rho) = \sum_i E_i \rho E_i^\dag$ \\ $\sum E_i^\dag E_i = I$} \\
\hline
Logic & Probabilistic temporal logic & \thead{Temporal extension of \\ Birkhoff-von Neumann quantum Logic} \\
\hline
% Techniques & & \\
% \hline
\end{tabular}
\end{table}
% The quantum Markov chain is a quantum generalization of its classical counterpart: In a quantum Markov chain, a Hilbert space $\mathcal{H}$ serves as the state space, and a super operator $\mathcal{E}$ acting on it serves as a transition matrix. A quantum Markov chain can model many essential quantum protocols and algorithms due to the flexibility of the definition of channel $\mathcal{E}$, like random quantum walk, repeat-until-success circuits, and quantum Bernoulli factories.
\begin{itemize}\item A pure state of an $n$-dimensional quantum system is represented by a unit complex vector $|\psi\rangle\in\mathbb{C}^n$. In Table-\ref{tab1}, the density operator $\rho=\sum_ip_i|\psi_i\rangle\langle\psi_i|$ is a mathematical representation of  ensemble $\{(p_i,|\psi_i\rangle\}$, meaning the system is in state $|\psi_i\rangle$ with probability $p_i$. Thus, $\rho$ can be seen as a quantum analog of the initial probability distribution in a classical MC.
%if we view states $s\in S$ as a computational basis of $\mathcal{H}$, forming a diagonal density matrix. 
\item The quantum operation $\mathcal{E}$ is a quantum generalization of the transition probability matrix in a classical MC. According to the principles of quantum mechanics, it can be mathematically modelled as $\mathcal{E}(\rho) = \sum_i E_i \rho E_i^\dag$,
where each $E_i$ is an $n\times n$ complex matrices (called \textit{Kraus matrices}) satisfying the condition $\sum E_i^\dag E_i = I$ (the unit matrix), which is a counterpart of the normalization condition in a classical MC. In particular, the evolution of a  closed quantum system is described by $\mathcal{E}(\rho)=U\rho U^\dag$, where $U$ is a unitary matrix, i.e. $UU^\dag=U^\dag U=I$. 
\end{itemize}

\textbf{Quantum reachability problem}: Given a QMC $\mathcal{C} = \langle\mathcal{H}, \mathcal{E}\rangle$ and any initial state $\rho$ in $\mathcal{H}$, compute the reachable subspace:
\begin{align}
    \label{reach-def}
    \mathcal{R}_\mathcal{C}(\rho) = \text{span}\{\ket{\psi}\in \mathcal{H} : \ket{\psi} \text{ is reachable from }\rho \text{ in } \mathcal{C}\}
\end{align}
Intuitively, $\mathcal{R}_\mathcal{C}(\rho)$ consists of not only the states reached in the execution path  $\rho\rightarrow \mathcal{E}(\rho)\rightarrow\mathcal{E}^2(\rho)\rightarrow\cdots\rightarrow\mathcal{E}^n(\rho)\rightarrow\cdots$ but also their linear combinations. 

\begin{example}[Quantum random walk]
\label{example1}
Consider a quantum random walk on a 4-cycle~\cite{venegas2012quantum,sequential} with the Hadamard operator as a coin $c$, shown in Figure-\ref{fig:qwalk}. The walking space is a 4-dimensional Hilbert space $\mathcal{H}_p$ on the bottom two qubits $p_1,p_2$, supported by four position states $\{|0\rangle_p, |1\rangle_p, |2\rangle_p, |3\rangle_p\}$ in the computational basis. After applying the coin $H$ in each step, the evolution of the system is described by a quantum conditional (shift operator):
%$S$ which has the form:
$$S = |0\rangle_c\langle 0|\otimes\sum_i |i+1\rangle_p\langle i| + |1\rangle_c\langle 1|\otimes\sum_i |i-1\rangle_p\langle i|$$
where the neighbor-state $|i+1\rangle$ and $|i-1\rangle$ are computed modulo 4. It means that the walker can simultaneously walk in different directions, which is the main difference between classic and quantum random walks.\begin{figure}[htbp] \centering
\centerline{
\Qcircuit @C=1.0em @R=0.6em {
\lstick{d}    & \qw & \qw & \qw & \qw   & \qw      & \qw                       & \targ     & \qw   & \meter& \cw  \\
\lstick{c}    &     & \qw & \qw & \qw > & \gate{H} & \multigate{2}{\mathit{S}} & \qw       & \qw   & \qw   & \qw  \\
\lstick{p_1}  & \qwx&     & \qw & \qw > & \qw      & \ghost{\mathit{S}}        & \ctrl{-2} & \qw   & \qw   & \qwx\\
\lstick{p_2}  & \qwx& \qwx&     & \qw > & \qw      & \ghost{\mathit{S}}        & \ctrl{-3} & \qw   & \qwx  & \qwx \\
              & \qwx& \qwx& \qwx& \qw   & \qw      & \qw                       & \qw       & \qw \qwx & \qwx&\qwx \\
              & \qwx& \qwx& \qw & \qw   & \qw      & \qw                       & \qw       & \qw   & \qw \qwx&\qwx \\
              & \qwx& \qw & \qw & \qw   & \qw      & \qw                       & \qw       & \qw   & \qw & \qw \qwx
}
}
\caption{Quantum random walk on a 4-cycle. This figure is from~\cite{sequential},~\cite{chen2022}}\label{fig:qwalk}
% \hrulefill \vspace*{4pt}
\end{figure}
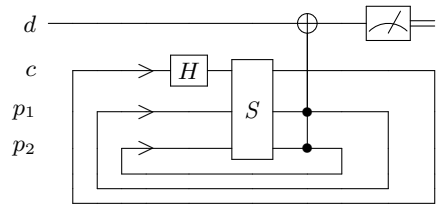
%Since multi-controlled gates and the measured qubit $d$ don't influence the quantum walk, we can model the quantum system composed of qubits $c$, $p_1$, and $p_2$ with a 
This system can be modelled as a QMC with $\mathcal{E}$ being defined by  the unitary operator $S(H_c\otimes I_p)$. Let it start in pure state $\rho = |000\rangle\langle000|$. Using the tool QReach presented in this paper, one can compute that the reachable space of this QMC is the $6$-dimensional space with linear independent basis $\{\ket{000}, \ket{001}+\ket{111}, \ket{100}-\ket{110}, \ket{101} +\ket{001}, \ket{010}, \ket{011}+\ket{101}\}$. It is surprising that not the whole $8$-dimensional state space $\mathcal{H}_c\otimes\mathcal{H}_p$ is  reachable although any position in $\mathcal{H}_p$ may be hit in some time.
\end{example}

\section{Architecture and Data structures}
% \begin{itemize}
%     \item Some representations of kraus operators. QASM files or other input instructions.
%     \item Represent projectors
%     \item Simulation-based algorithms
%     \item Introduction and modifications for CFLOBDD
%     \item Some more applications. For example, to check whether a state is in a subspace. Or set some boundary for reachable states calculations.
% \end{itemize}

% Paragraph for input, output, and applications.

In this section, we elaborate on the architecture of QReach and some reasoning techniques for quantum circuits based on the CFLOBDD backend. Although our target is specified on quantum reachability analysis, we believe that some functionalities of QReach are also useful for other tasks.
%, which are available to users for invoking and modifying ({\color{blue}http:?????}).

\subsection{Architecture of QReach}\label{QReach-arch}
An overview of the architecture of QReach is presented in Figure-\ref{fig:arch}. For convenience of presentation, we describe QReach's procedure with Example-\ref{example1}. Let the system starting in state $\rho=|000\rangle\langle000|$. To illustrate the capability of QReach for handling general quantum operations, we consider a faulty quantum random walk in which a bit-flip error happens in front of the Hadamard gate with probability $p$. The behavior of the system
%Example \ref{example1} 
is then modelled by $\mathcal{E}(\rho) = E_0 \rho E_0^\dag+E_1 \rho E_1^\dag$, where $E_0 = \sqrt{1-p}\ S (H_c\otimes I_p)$, and $E_1 = \sqrt{p}\ S (H_c\otimes I_p) (X\otimes I_p)$. Our purpose is to compute the reachable subspace of $\mathcal{E}$. An example Python program for this process is shown in Figure-\ref{fig:demo}.

% Show some other functions. More details, for example, what is the output, how to get the result
\begin{figure}[htbp]
    \centering
    \begin{lstlisting}[language=Python]
    # Prepare the BitFlipError
    e = QError("Bitflip", loc=[0,0], params=[], p=0.5)

    # Prepare the quantum Markov chain
    qmc = QMarkov(cir_body="qrw.qasm", channels=[e])

    # Prepare model checker and its CFLOBDD backend
    qChecker = fromMarkovModel(qmc)
    qChecker = initWithStr(qChecker, str_list=[])

    # Execution and output
    reachable_dim = qChecker.reachability()
    qChecker.printProjector()
    \end{lstlisting}
    \caption{A Demo for reachability analysis of QMCs. \texttt{channels} is a list of quantum errors and instructions like measurement and reset, containing their occurring positions.}
    \label{fig:demo}
\end{figure}

We implement our symbolic quantum reachability analysis  algorithm with CFLOBDD in a C++ core and provide Python interfaces for invoking. Some of the key components are explained below:

% \begin{itemize}
    \textbf{\textit{Input and output}}. QReach accepts a composite input specification to represent a QMC. A quantum program written in the QASM format \cite{cross2022openqasm} is parsed as the main circuit body of the QMC. Some specified quantum errors, measurements, and other non-unitary channels can be coded as a supplement in the \texttt{Qchannel} type. Once results are obtained, some subspace characters (e.g. dimensions and support vectors) can be output.
    
    \textbf{\textit{Simulation}}. A well-formed toolkit Quasimodo \cite{sistla2023symbolic} based on CFLOBDD was implemented for quantum algorithm simulation. Simulation for quantum circuits, or in other words, applying a sequence of quantum gates on a pure state, is an essential process during the reachability analysis. 
    Therefore, we referred to some of the codes’ implementation details from Quasimodo. Specifically, we adopted Quasimodo's Pybind architecture, which links C++ APIs and Python. However, a simulation framework like Quasimodo cannot handle situations in reachability analysis like mixed states and super-operators. We fixed these issues, added some new features, and stored gate sequences and intermediate projectors to make the simulation execution process not just sequential.

    These methods are covered in \texttt{fromMarkovModel()} and \texttt{reachability()}, while still available to invoke independently for other purposes. For example, \texttt{qchecker.u3()} conducts normal U3 gate simulation on the state vector in the current workspace; \texttt{setProjector()} and \texttt{applyProjector()} methods provide data manipulation in \textit{Projector} and \textit{State vector} as shown in Figure-\ref{fig:arch}. Detailed techniques used in our CFLOBDD simulator different from that in previous works \cite{sistla2022cflobdds,sistla2023symbolic} will be discussed in Section \ref{sec:3.1}.

    % We used a similar but broader simulation framework in Qreach as part of our quantum reachability analysis. Detailed techniques will be discussed in Section \ref{sec:3.1}. In the state preparation step, QReach supports multiple types of initializations, including superpositions and mixed states, which is another addition to Quasimodo. 
    % These methods are covered in \texttt{fromMarkovModel()} and \texttt{qChecker.reachability()}, while still available to invoke independently for other purposes. For example, \texttt{qchecker.u3()} conducts normal U3 gate simulation on the state vector in the current workspace; \texttt{setProjector()} and \texttt{applyProjector()} methods provide data manipulation in \textit{Projector} and \textit{State vector} as shown in Figure-\ref{fig:arch}.
    
    \textit{\textbf{Reachability analysis}}. The efficient algorithm for quantum reachability analysis to be elaborated in Section \ref{sec:D} has been implemented in QReach. We also implemented the interfaces for some of mathematical tricks in \cite{yu2012reachability} (e.g. Choi matrix transformation and maximally entangled state preparation) on CFLOBDDs, which will be critical in a future extension of QReach for computing reachability probabilities (rather than subpaces) of QMCs.  
% \end{itemize}

\begin{figure}[htbp]
    \centering
    \includegraphics[width=0.8\textwidth]{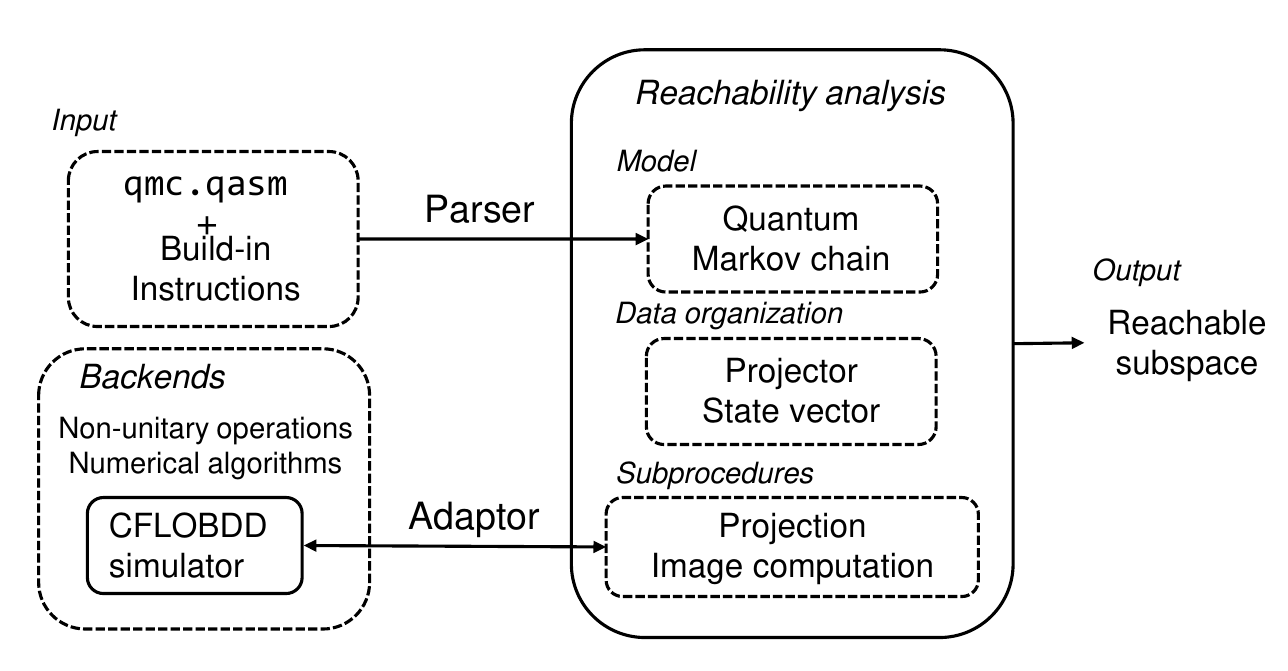}
    \caption{Architecture of QReach.}
    \label{fig:arch}
\end{figure}

% Some shortcomings, numerical algorithms like decomposition. Initial state. Use other backends

\subsection{CFLOBDD for quantum reachability analysis}\label{sec:3.1}

Now we introduce our CFLOBDD backend, which implements some support for numerical algorithms and quantum instructions. In particular, we illustrate how to expand simulation of a quantum system  to its reachability analysis.

As a newly proposed DD-based structure, CFLOBDD attracts our attention due to its distinctive features compared to other DDs for quantum systems. 
%In a tensor-decision diagram (TDD), qubits in a circuit are seen as tensor indices, assigned to a number zero or one. Such binary structure is a direct counterpart of the BDD, while mapping to complex numbers causes reduction and canonicity problems. 
CFLOBDD adopts a single-entry, multi-exit, non-recursive, hierarchical finite-state machine architecture \cite{sistla2022cflobdds}. From a programming perspective, a certain form of procedure call is invoked, leading to some exponential compression over BDDs. Following the name \textquotedblleft context-free language\textquotedblright, the incoming and outgoing edges of groupings are matched according to certain principles. Figure-\ref{fig:cflobdd} provides a general insight into how the edges of CFLOBDDs are matched. The representing capability of the CFLOBDD is exploited in our tool QReach for symbolic reachability analysis of QMCs.

\begin{figure}[htbp]
    \centering
    \includegraphics[width=0.4\textwidth]{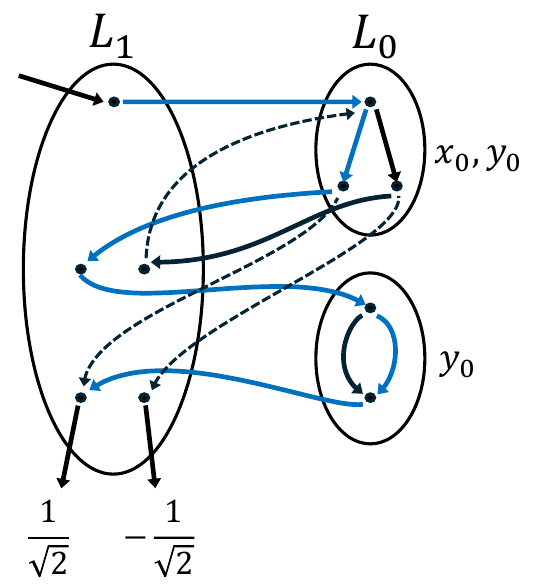}
    \caption{CFLOBDD for Hadamard gate. Indices are represented in the $L_0$ groupings consisting of fork nodes and don't care nodes. A path from the entry of the topmost grouping to the terminal values denotes an assignment to all indices. For example, the bold blue path corresponds with the $\{x_0=0, y_0=1\}$ entry of the Hadamard matrix, resulting a value $\frac{1}{\sqrt{2}}$.}
    \label{fig:cflobdd}
\end{figure}

% Talk about difficulties, why simulation but not representing circuits

Like any other type of DDs, the compression capability of CFLOBDDs only stands out in specific instances. In these cases, canonical reduced forms reuse parts in a DD and save storage from the raw data. However, in general circumstances, the number of nodes required to represent a large-scale matrix or vector is still exponential. Reordering strategies for reduced ordered BDDs to optimize storage usage are hard (NP-complete) \cite{bollig1996improving}. This problem becomes even more severe in algebraic DDs \cite{bahar1997algebric}, where non-Boolean values make it harder to find similar structures in a diagram. Therefore, we cannot simply represent a quantum operation or a projector as a single CFLOBDD without a partition strategy. Unlike classical symbolic model checking, a quantum circuit is usually difficult to partition due to entanglements. We chose an alternative in the QReach backend: using an augmented simulation method to calculate quantum operations. Thus, only state vectors and single quantum gates need to be stored, rather than the whole matrix.

In particular, circuits in QMCs are usually complicated, involving noises and dynamic operations. To handle them, we strengthen CFLOBDD with the following techniques:

% \textit{Basic matrix-vector operations}. Existing methods of basic operations in CFL-\\OBDD and Quasimodo are inherited in QReach.

% \textit{Non-unitary operators}. Different from measure operations in a quantum circuit simulator, the states after measurement are what we need in quantum model checking. (Noises and measurements)

% \textit{Partial trace}. Partial trace is an important class of non-unitary quantum operations. Some built-in instructions of quantum devices, like qubits reset, are based on calculating the partial trace.

\textbf{\textit{Non-unitary operators}}. Apart from normal quantum gates like Hadamard, Pauli, and generic U3 rotation gates, we specifically support two-dimensional matrices of any form, covering those non-unitary operators in quantum noises and measurements; for instance, amplitude damping channels  with operators:
$$E_0 = \begin{bmatrix}
    1 & 0\\
    0 & \sqrt{1-\gamma}
\end{bmatrix},\qquad E_1 = \begin{bmatrix}
    0 & \sqrt{\gamma}\\
    0 & 0
\end{bmatrix}$$
Another example of non-unitary operators is Z-basis measurements. The post measurement states are obtained by applying $P_0 = |0\rangle\langle0|$ and $P_1 = |1\rangle\langle1|$ respectively, followed by normalizations.

\textbf{\textit{Basic methods extensions}}. Some operations are added to CFLOBDD as a basis for top-level algorithms, incorporating the normalization and conjugate transpose. In addition to these methods, an optimizing trick for complex number representation is applied. We used a simplified version of the method proposed in \cite{zulehner2019efficiently}, constructing a Hash function and unique table for complex numbers.

\textbf{\textit{Numerical algorithms}}. Numerical algorithms are critical in QReach. Based on them, some operations that are particularly important in modelling quantum systems (e.g. partial trace and Choi matrix) are now available in QReach (see Example-\ref{example pt}). A key methodology is to decompose operands of calculations into base vectors, replacing complicated operations with matrix-vector multiplication or inner products of vectors. In Section \ref{sec:D}, the high-level description of reachability analysis algorithm also embodies this idea.

\begin{example}[Partial trace]
    \label{example pt}
    Consider a quantum system composed of qubits $A$ and $B$ in state $\rho_{AB}$. Then the state of $A$ can be described by the partial trace operator:
    $$\rho_A \equiv tr_B(\rho_{AB}):=\sum_i(I_A\otimes \langle i|_B)\rho_{AB}(I_A\otimes |i\rangle_B)$$
    For simplicity, suppose the system is in a pure state $|\psi\rangle = |0\rangle|\lambda\rangle + |1\rangle|\mu\rangle$. Tracing out qubit $A$, qubit $B$ should be in the mixed state $\rho = |\lambda\rangle\langle\lambda| + |\mu\rangle\langle\mu|$. In QReach, this procedure is conducted in the following steps to avoid redundant matrix manipulations and adjustments to CFLOBDD's internal structures: (1) Perform a Z-basis measurement on $A$ and get unnormalized post measurement states $|0\rangle|\lambda\rangle$ and $|1\rangle|\mu\rangle$; (2) Apply $X$ gate to $A$ conditionally on the measurement result one; (3) Let the collection $S = \{|0\rangle|\lambda\rangle, |0\rangle|\mu\rangle\}$. Then $S$ can be viewed as $\rho$'s representation and participate in later calculations. In some cases, like reachability analysis, the norm of a state vector is not essential and could be omitted thereby. Note that after these operations qubit $A$  remains in a tensored zero state, because the number of qubits in a CFLOBDD is required to be an exponential power of 2. We apply an $X$ gate on the measure-one result to make the effect looks  like resetting a qubit.
\end{example}

% \begin{example}[Choi matrix]
%     \label{example choi}
%     The Choi matrix is another representation of super operators.
% \end{example}

% advantages about CFLOBDD
% Compared to weighted reduced DDs (e.g. TDD or WCFLOBDD), complex numbers in a CFLOBDD are organized at the terminal entry but not on the internal edges. This feature decouples the data structure and coefficients, allowing us to operate certain parts of them independently.

\section{Algorithm for reachability analysis}
\label{sec:D} The existing algorithm for reachability analysis of QMCs is based on Choi matrix representation of quantum operations introduced in \cite{yu2012reachability}. In this section, we propose a more efficient algorithm for the same purpose (see 
Algorithm-\ref{alg-compute-reachable}).  
%is the main reachability analysis procedure in QReach. 

Our algorithm is a natural extension of reachability analysis in classical model checking using a BFS-based technique. The main difference is that we are dealing with reachable \textit{subspaces} of the Hilbert space $\mathcal{H}$ rather than \textit{subsets} of a finite set of states in the classical case. Therefore, Algorithm-\ref{alg-compute-reachable} traverses each possible \textit{dimension} of a finite-dimensional Hilbert space non-repetitively rather than each reachable state as in the classical case.
%is intuitive and efficient enough for such model-checking problems. 
Note that the code segment from Line 7 to Line 12 is the process of extracting vectors orthogonal to those that have been searched, which is similar to frontier set simplification in classical symbolic model checking. 

The nontrivial subprocedures in Algorithm-\ref{alg-compute-reachable} differing from that in classical reachability analysis are the projection and image computation (Line 5 and Line 7). To reduce the representing and temporal cost in the algorithm, our basic idea is to take eigenvectors into calculation instead of the whole matrix. In practice, most of the projections are low-rank, which ensures the efficiency of this idea.
\begin{algorithm} 
	\caption{Computing reachable space} 
	\label{alg-compute-reachable} 
	\begin{algorithmic}[1]
		\REQUIRE Super operator $\mathcal{E}$ in Hilbert space $\mathcal{H}$ with dimension $d$; set of initial states $P$
		\ENSURE A set of orthogonal basis $P'$ of reachable subspace of $\mathcal{H}$
            \STATE $P' \gets Gram\_Schmidt(P),\quad cnt\gets size(P')$
            \STATE Initialize queue $Q$ with $P'$
            \WHILE{$Q$ not empty and $cnt < d$}
            \STATE $curr\_state \gets$ $Q$.pop()
            \STATE $expanded\_states \gets$ $\texttt{Image}(\mathcal{E}, curr\_state)$
            \FOR{$s$ in $expanded\_states$}
            \STATE $s\gets s - \texttt{Project}(P', s)$
            \STATE $s\gets \texttt{normalize}(s)$
            \IF{$s$ is not zero vector}
            \STATE $Q$.push($s$)
            \STATE $P'$.append($s$)
            \STATE $cnt\gets cnt+1$
            \ENDIF
            \ENDFOR
            \ENDWHILE
		\RETURN $P'$ 
	\end{algorithmic}
\end{algorithm}

\textbf{Implementation in QReach}. For image computation, we exploit the simulation functionality of our backend data structure  CFLOBDD. The runtime of the backend's simulation highly determines our algorithm's efficiency. In this step, non-unitary operations like noises, measurements, qubit resets, and deallocations will be simulated by the augmented simulator introduced in the past section. All the simulations of channels together make up the image computation of Kraus representations $\mathcal{E}(\rho) = \sum_i E_i \rho E_i^\dag$.

A projector onto a subspace of the Hilbert space can be represented by a set of orthogonal support vectors of the subspace. This technique can be viewed as a quantum version of partitioning, which usually appears as forms of disjunctions and conjunctions in classical symbolic model checking \cite{burch1994symbolic}. Formally, let $\mathcal{P}$ be the projector onto a subspace with an orthonormal basis $\{|i\rangle\}$, that is, $\mathcal{P} = \sum |i\rangle\langle i|$, then we set $P$ to be the set of $\ket{i}$'s, and
$$\texttt{Project}(P, \ket{s}) = \sum \langle s|i\rangle^\ast |i\rangle$$
We conduct conjugate-transposing on $\ket{s}$ instead of $\ket{i}$ to invoke vector operations as few as possible. The computational complexity of subprocedure \texttt{Image} and \texttt{Project} are both $O(d^2)$ given a constant number of Kraus matrices. 

The following theorem shows the correctness and complexity of our algorithm.

\begin{theorem}
\label{correctness}
The output $P'$ and input $P$ of Algorithm-\ref{alg-compute-reachable} satisfies:
$\text{span}(P')=\mathcal{R}_\mathcal{C}(\rho)=\bigvee^{d-1}_{i=0} \text{supp}(\mathcal{E}^i(\rho)),$
where $\rho$ is the initial state, and $\text{supp}(\rho)=\text{span}(P)$. The time complexity of Algorithm-\ref{alg-compute-reachable} is $O(d^3)$
\end{theorem}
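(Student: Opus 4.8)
The plan is to prove the three assertions in the order (i) the closed form $\mathcal{R}_\mathcal{C}(\rho)=\bigvee_{i=0}^{d-1}\text{supp}(\mathcal{E}^i(\rho))$, (ii) correctness $\text{span}(P')=\mathcal{R}_\mathcal{C}(\rho)$, and (iii) the $O(d^3)$ time bound, since (i) feeds the proof of (ii).

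For (i), I would first unwind the definition in \eqref{reach-def}: a pure state is reachable from $\rho$ precisely when it lies in $\text{supp}(\mathcal{E}^i(\rho))$ for some $i\ge 0$, so $\mathcal{R}_\mathcal{C}(\rho)=\bigvee_{i\ge 0}\text{supp}(\mathcal{E}^i(\rho))$. Writing $X_n:=\bigvee_{i=0}^{n}\text{supp}(\mathcal{E}^i(\rho))=\text{supp}\bigl(\sum_{i=0}^{n}\mathcal{E}^i(\rho)\bigr)$, the chain $X_0\subseteq X_1\subseteq\cdots$ is nondecreasing inside the $d$-dimensional space $\mathcal{H}$. The crux is monotonicity of support under a quantum operation: if $\text{supp}(\sigma_1)\subseteq\text{supp}(\sigma_2)$ then $\sigma_2\ge\lambda\sigma_1$ for some $\lambda>0$, hence by complete positivity $\mathcal{E}(\sigma_2)\ge\lambda\mathcal{E}(\sigma_1)\ge 0$ and so $\text{supp}(\mathcal{E}(\sigma_1))\subseteq\text{supp}(\mathcal{E}(\sigma_2))$. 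Applying $\mathcal{E}$ to $\sum_{i=0}^{n}\mathcal{E}^i(\rho)$ and using this, $X_{n+1}=X_n$ forces $X_{n+2}=X_{n+1}$, so the chain is stationary from the first index where it stops strictly growing; since $\dim X_0\ge 1$ and every strict step adds at least one dimension, it is stationary by index $d-1$, which gives (i).

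For (ii), I would argue by a loop invariant on Algorithm~\ref{alg-compute-reachable}: $P'$ is always an orthonormal set with $\text{supp}(\rho)\subseteq\text{span}(P')\subseteq\mathcal{R}_\mathcal{C}(\rho)$. The left inclusion holds right after $Gram\_Schmidt(P)$ and persists because $P'$ only grows; the right inclusion persists because every appended vector is a nonzero scalar multiple of $E_j\ket{v}-\texttt{Project}(P',E_j\ket{v})$ with $\ket{v}=curr\_state\in\text{span}(P')$, and $\mathcal{R}_\mathcal{C}(\rho)$ is closed under every Kraus operator $E_j$ — a direct consequence of the closed form (i) together with support monotonicity applied to the dominating state $\sum_{i=0}^{d-1}\mathcal{E}^i(\rho)$ — so $E_j\ket{v}\in\mathcal{R}_\mathcal{C}(\rho)$, and subtracting an element of $\text{span}(P')\subseteq\mathcal{R}_\mathcal{C}(\rho)$ keeps it there. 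For the reverse inclusion I would first note that the loop terminates: at most $d$ vectors are ever enqueued (the initial orthonormal set plus at most $d-\dim\text{supp}(\rho)$ appends), each dequeued once. Then I split on the halting condition: if $cnt=d$ then $\text{span}(P')=\mathcal{H}\supseteq\mathcal{R}_\mathcal{C}(\rho)$; if $Q$ is empty then every basis vector $\ket{v}$ of the final $P'$ has been processed, so for each Kraus index $j$ the residual of $E_j\ket{v}$ modulo the then-current $P'$ was either zero or was appended, i.e. $E_j\ket{v}\in\text{span}(P')$. By linearity $\text{span}(P')$ is invariant under every $E_j$, hence $\text{supp}(\mathcal{E}(\sigma))\subseteq\text{span}(P')$ whenever $\text{supp}(\sigma)\subseteq\text{span}(P')$; starting from $\text{supp}(\rho)\subseteq\text{span}(P')$ and inducting on $i$ gives $\text{supp}(\mathcal{E}^i(\rho))\subseteq\text{span}(P')$ for all $i$, so $\mathcal{R}_\mathcal{C}(\rho)\subseteq\text{span}(P')$. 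The two inclusions together with (i) yield (ii).

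For (iii), the while loop runs at most $d$ times by the enqueue bound above; each pass performs one $\texttt{Image}$ call and a constant number (one per Kraus matrix) of $\texttt{Project}$–subtract–normalize steps, each $O(d^2)$ as recorded before Theorem~\ref{correctness}, hence $O(d^2)$ per pass and $O(d^3)$ in total, which also dominates the $Gram\_Schmidt$ preprocessing (orthogonalising $O(d)$ vectors in dimension $d$). I expect the main obstacle to be the stabilization step in (i): making precise that the ascending chain of support-joins $X_n$ freezes after at most $d-1$ steps, which is exactly where complete positivity of $\mathcal{E}$ enters through the implication $\text{supp}(\sigma_1)\subseteq\text{supp}(\sigma_2)\Rightarrow\text{supp}(\mathcal{E}(\sigma_1))\subseteq\text{supp}(\mathcal{E}(\sigma_2))$; the remainder is bookkeeping on the BFS.
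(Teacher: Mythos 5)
Your proposal is correct and follows essentially the same route as the paper: the closed form $\mathcal{R}_\mathcal{C}(\rho)=\bigvee_{i=0}^{d-1}\text{supp}(\mathcal{E}^i(\rho))$ via an ascending chain of supports that stabilizes within $d$ steps (using monotonicity of support under the CP map), BFS correctness via termination in at most $d$ iterations plus the fact that at termination $\text{span}(P')$ is invariant under the Kraus operators while every update stays inside $\mathcal{R}_\mathcal{C}(\rho)$, and the $O(d^3)$ bound from $d$ iterations of $O(d^2)$ subprocedures. Your loop-invariant formulation is in fact a more explicit write-up of the paper's three-step argument (termination, $\text{span}(\texttt{Image}(\mathcal{E},P'))=\text{span}(P')$, and $\mathcal{R}_\mathcal{C}(P')=\mathcal{R}_\mathcal{C}(\rho)$), not a different method.
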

\begin{proof}
    % See Appendix-\ref{appendix1}
    Following the theorem 1 in \cite{yu2012reachability}), for $d = dim(\mathcal{H})$, and any density operator $\rho$ in $\mathcal{H}$,
    \begin{align}\label{lma}
        \mathcal{R}_\mathcal{C}(\rho) = \text{supp}\left(\sum_{i=0}^{d-1}\mathcal{E}^i(\rho)\right)
    \end{align}
    And all reachable states can be reached in at most $d$ iterations. The correctness is proved in three steps: \begin{itemize}
        \item[i)] The main \textit{while} loop terminates in at most $d$ steps;
        \item[ii)] When terminates, $\text{span}(\texttt{Image}(\mathcal{E}, P')) = \text{span}(P')$;
        \item[iii)] $\mathcal{R}_\mathcal{C}(P') = \mathcal{R}_\mathcal{C}(\rho)$.
        %At the beginning of each loop, it holds that $\mathcal{R}_\mathcal{C}(\texttt{Image}(\mathcal{E},Q))\lor \text{supp}(P')=\mathcal{R}_\mathcal{C}(\rho)$.
    \end{itemize}

    At last, combining the complexity of broad-first-search and subprocedures, the Algorithm-\ref{alg-compute-reachable} has complexity $O(d^3)$, which is an improvement over  $O(d^{4.7454})$ in \cite{yu2012reachability}.
\end{proof}

The dimension of a Hilbert space often grows exponentially larger in quantum systems. Therefore, our algorithm will inevitably become inefficient on quantum circuits with more than twelve qubits. Although limited on the scale of quantum systems, Algorithm-\ref{alg-compute-reachable} is stable for the number of quantum operations and the dimension of initial spaces. The BFS strategy and the frontier set simplification ensure that only dimensions that are reached for the first time can be counted.

% add something about the numerical algorithms on CFLOBDD

\section{Use cases and experiments}
% \begin{itemize}
%     \item Quantum random walk
%     \item repeat-until-success circuits
% \end{itemize}
Some cases are studied in this section, providing insight into practical applications of our tool QReach for quantum reachability analysis in the future. All experiments were conducted on a personal computer with hardware configurations: Intel i5-13600kf CPU with 14 cores; 32GB RAM. The experimental results are presented in Table-\ref{tab:result}.

\begin{table}
\centering
\caption{Experimental results. The noise in QRW is amplitude dumping. For RUS, the circuits implement $(I+2iZ)/\sqrt{5}$, $(2X+\sqrt{2}Y+Z)/\sqrt{7}$, and $(3I+2iZ)/\sqrt{13}$ respectively.}\label{tab:result}
\begin{tabular}{|c|c|c|c|c|c|c|}
\hline
 & \#Qubits & Ope. type & Initial dim. & Time(s) & \#Edges & Reachable dim.\\
\hline
\multirow{4}{*}{Grover} & 7 & Unitary & 1 & 0.0252 & 268 & 2 \\
 & 15 & Unitary & 1 & 0.0303 & 350 & 2 \\
 & 31 & Unitary & 1 & 0.0516 & 432 & 2 \\
 & 63 & Unitary & 1 & 0.0885 & 514 & 2 \\
 \hline
\multirow{7}{*}{QRW} & 3 & Unitary & 1 & 0.0284 & 435 & 6 \\
 & 5 & Unitary & 1 & 0.0561 & 1337 & 10 \\
 & 7 & Unitary & 1 & 0.196 & 7512 & 34 \\
 & 9 & Unitary & 2 & 379.64 & 608097 & 512 \\
 & 7 & Noise & 2 & 0.446 & 10211 & 64 \\
 & 9 & Noise & 2 & 110.70 & 447811 & 512 \\
 & 10 & Noise & 2 & 117.36 & 421038 & 1024 \\
 \hline
\multirow{3}{*}{RUS} & 3 & Measure & 1 & 0.0262 & 130 & 2 \\
 & 2 & Measure & 1 & 0.0216 & 74 & 2 \\
 & 2 & Measure & 1 & 0.0203 & 60 & 1 \\

\hline
\end{tabular}
\end{table}

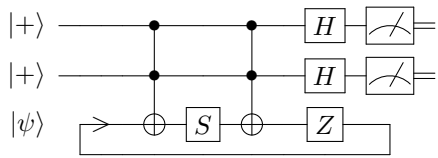
\begin{figure}[htbp]\centering
\normalsize\begin{equation*}\qquad 
\Qcircuit @C=.8em @R=.5em {
\lstick{\ket{+}}    & \qw  & \qw & \qw & \ctrl{1} & \qw      & \ctrl{1} & \qw & \gate{H}  & \meter & \cw \\
\lstick{\ket{+}}    & \qw  & \qw & \qw & \ctrl{1} & \qw      & \ctrl{1} & \qw & \gate{H}  & \meter & \cw \\
\lstick{\ket{\psi}} &      & \qw> & \qw & \targ    & \gate{S} & \targ    & \qw & \gate{Z} & \qw \\
                    & \qwx & \qw & \qw & \qw      & \qw      & \qw      & \qw & \qw      & \qw \qwx
}
\end{equation*}
\caption{A repeat-until-success circuit~\cite{sequential,10.5555/2685179.2685181} for gate $V_3=(I+2iZ)/\sqrt{5}$.}
\label{fig:rus}
% \hrulefill \vspace*{4pt}
\end{figure}

\textbf{\textit{Grover search}}. The Grover search algorithm provides a quadratic speedup over a series of classical search algorithms \cite{grover1998quantum}. The main idea of the Grover search is to apply a quantum subroutine iteratively which leads to a rotation from the initial state to the target state. %Although the original Grover search is a combinational quantum circuit with a finite length that doesn't satisfy the rigorous definition of the quantum Markov chain, as an example, we could still verify a temporal property by 
QReach's reachability analysis under some float precision shows that during iterations, the state is always located in the 2-dimensional subspace spanned by the initial state and the target state.

\textbf{\textit{Quantum random walk}}. We tested quantum random walk circuits (QRW) with different numbers of qubits which have a similar structure with Example-\ref{example1}. To demonstrate more functionalities of QReach, mixed initial states and amplitude dumping noises are introduced in front of the Hadamard gate. The (dimension of) reachable space computed by QReach for these quantum circuits are given in Table-\ref{tab:result}. 

\textbf{\textit{Repeat-until-success circuits}}. Repeat-until-success (RUS) circuits \cite{10.5555/2685179.2685181} are a type of circuit that decides whether to repeat or terminate based on the measurement results. It is usually used to design circuits with fewer non-Clifford gates or ancilla qubits. In QReach, it can be modelled as a quantum Markov chain with measurements and qubit resetting as parts of the channel. We tested some of the examples in \cite{10.5555/2685179.2685181}. It is clear that the reachable dimensions should be 2 or 1, depending on whether the resulting quantum states differ by only one global phase if the measurement succeeds or fails.

There is a consensus that every future tool released in quantum model checking must face the problem of finding broader applications. Besides these cases, we are improving the scope of QReach and exploring more possible applications on sequential quantum circuits and protocols.

%\section{Conclusion}

%
% ---- Bibliography ----
%
% BibTeX users should specify bibliography style 'splncs04'.
% References will then be sorted and formatted in the correct style.
%

\newpage

\bibliographystyle{splncs04}
\bibliography{reference}

\newpage

\appendix
\section{Proof of Theorem-\ref{correctness}}\label{appendix1}
\subsection{Preliminaries}
Some notations in Theorem-\ref{correctness} and other parts of this paper are formally defined below. For more information, please refer to \cite{ying2021model}.
\begin{definition}
    The support $\text{supp}(\rho)$ of a density operator $\rho\in \mathcal{D}(\mathcal{H})$ is the subspace of $\mathcal{H}$ spanned by the eigenvectors of $\rho$ with non-zero eigenvalues.
\end{definition}
\begin{definition}
    Let $\{X_k\}$ be a family of subspaces. The join of $\{X_k\}$ is defined by
    $$\bigvee_k X_k = \text{span}\left(\bigcup_k X_k\right).$$
    In particular, $X\lor Y$ means the join of subspaces $X$ and $Y$.
\end{definition}

The following Lemma (Theorem 1 from \cite{yu2012reachability}) is the closed form characterization of the reachable space.
\begin{lemma}\label{lamma1}
    For $d = dim(\mathcal{H})$, and any density operator $\rho$ in $\mathcal{H}$,
    \begin{align}\label{eq_lemma}
        \mathcal{R}_\mathcal{C}(\rho) = \text{supp}\left(\sum_{i=0}^{d-1}\mathcal{E}^i(\rho)\right)
    \end{align}
\end{lemma}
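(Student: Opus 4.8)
The plan is to identify the reachable space with an increasing chain of subspaces built from the supports of the iterates $\mathcal{E}^i(\rho)$, show that this chain stabilizes within $d-1$ steps, and recognize the stabilized subspace as the support of the truncated sum. First I would unwind the definition of $\mathcal{R}_\mathcal{C}(\rho)$: a pure state is reachable from $\rho$ exactly when it lies in $\text{supp}(\mathcal{E}^i(\rho))$ for some $i\ge 0$ (these are precisely the states the system can occupy after $i$ steps), so that
$$\mathcal{R}_\mathcal{C}(\rho) = \bigvee_{i=0}^{\infty}\text{supp}(\mathcal{E}^i(\rho)).$$
Writing $X_N = \bigvee_{i=0}^{N}\text{supp}(\mathcal{E}^i(\rho))$, the goal becomes to show $X_{d-1}=X_\infty$ together with the identity $X_{d-1}=\text{supp}\bigl(\sum_{i=0}^{d-1}\mathcal{E}^i(\rho)\bigr)$.

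The second identity rests on a routine lemma for positive semidefinite operators: $\text{supp}(A+B)=\text{supp}(A)\vee\text{supp}(B)$. I would prove it by taking $\ket{\phi}$ orthogonal to $\text{supp}(A+B)$, so that $\langle\phi|(A+B)|\phi\rangle = 0$; since both summands are nonnegative, each vanishes, forcing $A\ket{\phi}=B\ket{\phi}=0$, hence $\ket{\phi}\perp \text{supp}(A)\vee\text{supp}(B)$. The reverse inclusion is immediate, and induction extends this to finite sums, giving $\text{supp}\bigl(\sum_{i=0}^{N}\mathcal{E}^i(\rho)\bigr)=X_N$ for every $N$.

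The heart of the argument is to show the chain $X_0\subseteq X_1\subseteq\cdots$ stabilizes \emph{permanently} as soon as it stops growing. For this I would use monotonicity of $\mathcal{E}$ on supports: if $\text{supp}(A)\subseteq\text{supp}(B)$ then $\text{supp}(\mathcal{E}(A))\subseteq\text{supp}(\mathcal{E}(B))$. This relies on the characterization that $\text{supp}(A)\subseteq\text{supp}(B)$ is equivalent to $A\le cB$ for some $c>0$ in the L\"owner order; applying the (completely) positive, hence order-preserving, map $\mathcal{E}$ gives $\mathcal{E}(A)\le c\,\mathcal{E}(B)$, whence the support inclusion. Now suppose $X_{N+1}=X_N$, i.e. $\text{supp}(\mathcal{E}^{N+1}(\rho))\subseteq X_N=\text{supp}\bigl(\sum_{i=0}^{N}\mathcal{E}^i(\rho)\bigr)$. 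Applying monotonicity and using $\mathcal{E}\bigl(\sum_{i=0}^{N}\mathcal{E}^i(\rho)\bigr)=\sum_{i=1}^{N+1}\mathcal{E}^i(\rho)$ yields $\text{supp}(\mathcal{E}^{N+2}(\rho))\subseteq\bigvee_{i=1}^{N+1}\text{supp}(\mathcal{E}^i(\rho))\subseteq X_{N+1}=X_N$, so $X_{N+2}=X_N$ as well; induction makes stabilization permanent.

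A dimension count then closes the proof: since $\dim(X_0)\ge 1$ and each strict inclusion raises the dimension by at least one while $\dim X_N\le d$, the chain must reach its stable value $X_\infty=\mathcal{R}_\mathcal{C}(\rho)$ by step $d-1$ at the latest, so $X_{d-1}=X_\infty$. Combining with the support-of-sum identity gives $\mathcal{R}_\mathcal{C}(\rho)=X_{d-1}=\text{supp}\bigl(\sum_{i=0}^{d-1}\mathcal{E}^i(\rho)\bigr)$. I expect the main obstacle to be the permanence-of-stabilization step: everything hinges on the monotonicity of $\mathcal{E}$ with respect to support inclusion, and making that rigorous requires the L\"owner-domination characterization of support containment rather than a naive set-theoretic argument, since a quantum operation can both enlarge and shrink supports in a single step.
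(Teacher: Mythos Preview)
Your proof is correct and follows essentially the same route as the paper: express $\mathcal{R}_\mathcal{C}(\rho)$ as the infinite join of supports, use the identity $\text{supp}(A+B)=\text{supp}(A)\vee\text{supp}(B)$ to rewrite the partial joins as supports of partial sums, show that once the increasing chain $X_N$ stops growing it is permanently stable, and finish by a dimension count forcing stabilization by step $d-1$. The only cosmetic difference is that the paper phrases the permanence-of-stabilization step via the projector identity $Y_{n+1}=\text{supp}(\rho+\mathcal{E}(P_n))$ (where $P_n$ is the projector onto $Y_n$) rather than your explicit L\"owner-domination monotonicity argument, but the underlying content is the same.
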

First, by definition, the reachable subspace $\mathcal{R}_\mathcal{C} = \bigvee^{\infty}_{i=0} \text{supp}(\mathcal{E}^i(\rho))$. Following the fact that $\text{supp}(\rho+\sigma) = \text{supp}(\rho)\lor \text{supp}(\sigma)$, the right hand side of \ref{eq_lemma} can be written as $\bigvee^{d-1}_{i=0} \text{supp}(\mathcal{E}^i(\rho))$ denoted by $X$. It is obvious that $X \subseteq \mathcal{R}_\mathcal{C}(\rho)$. To prove the inversion, let subspaces $Y_n := \text{supp}\left(\sum_{i=0}^{n}\mathcal{E}^i(\rho)\right)$. And we have the relation $Y_0\subseteq Y_1\subseteq\cdots\subseteq Y_n\subseteq\cdots$. Suppose $r$ is the smallest integer such that $Y_r = Y_{r+1}$. According to the observation that $Y_{n+1} = \text{supp}(\rho+\mathcal{E}(P_n))$, where $P_n$ is the projector on $Y_n$, $Y_n = Y_r$ for any $n>r$ by induction. Since $\text{dim}(Y_r)$ is at most $d$, and $\text{dim}(Y_0)\geq 1$, there must be $r<d$. Finally, $\mathcal{R}_\mathcal{C} = \bigvee^{\infty}_{i=0} \text{supp}(\mathcal{E}^i(\rho)) \subseteq \bigvee^{d-1}_{i=0} \text{supp}(\mathcal{E}^i(\rho)) = X$.

\subsection{Correctness}
We are going to prove our theorem in three steps: \begin{itemize}
    \item[i)] The main \textit{while} loop terminates in at most $d$ steps;
    \item[ii)] When terminates, $\text{span}(\texttt{Image}(\mathcal{E}, P')) = \text{span}(P')$;
    \item[iii)] $\mathcal{R}_\mathcal{C}(P') = \mathcal{R}_\mathcal{C}(\rho)$.
    %At the beginning of each loop, it holds that $\mathcal{R}_\mathcal{C}(\texttt{Image}(\mathcal{E},Q))\lor \text{supp}(P')=\mathcal{R}_\mathcal{C}(\rho)$.
\end{itemize}
Here we abuse the notation $\mathcal{R}_\mathcal{C}(P) := \bigvee_{s \in P}\mathcal{R}_\mathcal{C}(s)$. The \texttt{Image} function is of the same meaning with that in Algorithm-\ref{alg-compute-reachable}. It means that for each element in $P'$, compute its image under $\mathcal{E}$ and return the image set.

i). Suppose the number of initial elements in $Q$ is $s$, and the number of loops is $l$; for the $i$th loop, the number of searched vectors in Line 9 is $k_i$. In the $i$th loop, the variables change with $cnt\leftarrow cnt+k_i$, $|Q|\leftarrow |Q|+(k_i-1)$. According to the termination conditions, we have the following two inequalities:
$$\begin{cases}
    &\sum_{i=1}^l k_i - l + s \geq 0\\
    &\sum_{i=1}^l k_i + s \leq d
\end{cases}$$
And we have $l\leq d$ by combining them.

ii). Note that $Q$ is always a subset of $P'$. Any elements will be simultaneously added to $P'$ and $Q$. Any time when an element $s$ is deleted from $Q$, it results in that $P'$ is expanded with the image of $s$. When terminates, if the condition $cnt \geq d$ is triggered, $\text{span}(P') = \mathcal{H} = \text{span}(\texttt{Image}(\mathcal{E}, P'))$. Otherwise, $Q$ is empty, which means that images of any elements in $P'$ are included in $\text{span}(P')$.

iii). $P'$ is initialized with a set of orthogonal eigenvectors of $\rho$. And the update of $P'$ only involves with image computation and linear combination.

The above three propositions imply that when the program terminates, $$\text{span}(P') = \mathcal{R}_\mathcal{C}(P') = \mathcal{R}_\mathcal{C}(\rho).$$

\subsection{Complexity and scalability of Algorithm-\ref{alg-compute-reachable}} 
Combining Theorem-\ref{correctness} and the computational complexity of subprocedures, the Algorithm-\ref{alg-compute-reachable} has complexity $O(d^3)$, which is an improvement over  $O(d^{4.7454})$ in \cite{yu2012reachability}. The dimension of a Hilbert space often grows exponentially larger in a quantum system. Therefore, our algorithm to give a whole description of reachable spaces will inevitably become inefficient on quantum circuits with more than twelve qubits. To remedy this issue, we will explore approximation algorithms and other backends for QReach to improve its flexibility in experiments in our future work.

Although QReach is limited on the scale of quantum systems, Algorithm-\ref{alg-compute-reachable} is stable for the number of quantum operations and the dimension of initial spaces. The BFS strategy and the frontier set simplification ensure that only dimensions that are reached for the first time can be counted. In potential real model-checking scenarios, different noises, measurements, and complicated state initializations will be common. We can also imagine that there are relatively few cases where a searched reachable space covers most dimensions because noteworthy subspaces (e.g., a 'bad state') may be small.

\end{document}